\documentclass[letterpaper, 10 pt, conference]{ieeeconf}  % Comment this line out if you need a4paper

\IEEEoverridecommandlockouts                              % This command is only needed if 
\usepackage{graphics} % for pdf, bitmapped graphics files
\usepackage{epsfig} % for postscript graphics files
\usepackage{mathptmx} % assumes new font selection scheme installed
\usepackage{times} % assumes new font selection scheme installed
\usepackage{amsmath} % assumes amsmath package installed
\usepackage{amssymb}  % assumes amsmath package installed
\usepackage{float}
\usepackage{mathtools}
  \usepackage{amsthm}
\usepackage[cal=boondoxo]{mathalfa}
\usepackage{cite}
\usepackage{enumerate}
\usepackage{color}
\usepackage{booktabs}
\usepackage[ruled,linesnumbered]{algorithm2e}
\makeatletter
\newcommand{\removelatexerror}{\let\@latex@error\@gobble}
\makeatother

\newcounter{storealgline}
\IncMargin{.5em}

\DontPrintSemicolon
\makeatletter
\renewcommand{\Indentp}[1]{%
	\advance\leftskip by #1
	\advance\skiptext by -#1
	\advance\skiprule by #1}%
\renewcommand{\Indp}{\algocf@adjustskipindent\Indentp{\algoskipindent}}
\renewcommand{\Indm}{\algocf@adjustskipindent\Indentp{-\algoskipindent}}
\makeatother

\usepackage{textcomp}
\usepackage{epstopdf}
\def\BibTeX{{\rm B\kern-.05em{\sc i\kern-.025em b}\kern-.08em
		T\kern-.1667em\lower.7ex\hbox{E}\kern-.125emX}}
\makeatletter
\providecommand{\bigsqcap}{%
	\mathop{%
		\mathpalette\@updown\bigsqcup
	}%
}
\newcommand*{\@updown}[2]{%
	\rotatebox[origin=c]{180}{$\m@th#1#2$}%
}
\makeatother

\newtheorem{thm}{Theorem}

\newtheorem{prop}[thm]{Proposition}

\newtheorem{rem}{Remark}

\usepackage{tikz}
\usetikzlibrary{shapes}
\usetikzlibrary{calc}
\usetikzlibrary{arrows.meta,calc,decorations.markings,math,arrows.meta}

\newcommand{\B}{\mathcal{B}}
\newcommand{\Hk}{\mathcal{H}}

\newcommand{\n}[1]{\mathtt{n}\left(#1\right)}
\newcommand{\lag}[1]{\mathtt{L}\left(#1\right)}
\newcommand{\cs}[1]{\mathrm{colspan}\left(#1\right)}
\newcommand{\proj}[2]{\pi_{#1}(#2)}

\usepackage{xifthen}
\newcommand{\sR}[2][]{\ifthenelse{\isempty{#1}}{\mathbb{R}^{#2}}{\mathbb{R}^{#1\times #2}}}

\newcommand{\bint}[2]{{|[#1,#2]}}

\begin{document}
\title{\LARGE \bf Behavioral Systems Theory Meets Machine Learning:\\ Control-Aware Learning of the Intrinsic Behavior from Big Data}
\author{Yitao Yan, Yu Tong, Jie Bao and Wei Wang
	\thanks{Y. Yan, Y. Tong and J. Bao are with the School of Chemical Engineering, UNSW Sydney, NSW 2052, Australia. (e-mail:  y.yan@unsw.edu.au; yu.tong6@unsw.edu.au; j.bao@unsw.edu.au).}
    \thanks{W. Wang is with the Hong Kong University of Science and Technology, Hong Kong SAR, China (e-mail: weiwcs@ust.hk).}
}

\maketitle
	\thispagestyle{empty}
	\pagestyle{empty}

\begin{abstract}
The abundance of process operating data in modern industries, along with the rapid advancement of learning techniques, has led to a paradigm shift towards data-centric analysis and control. However, integrating machine learning with control theory for big data-driven control of nonlinear systems remains a challenging open problem. This is because the state-based, model-centric, and causal framework of classical control theory fundamentally contradicts the trajectory-based, set-theoretic, and causality-absent rationale of big data-based learning approaches. Using the behavioral framework, we show that dynamical systems possess an intrinsic state variable that encodes the system behavior in a bijective and causality-free manner, and control design can be carried out entirely within the state space. This approach not only resolves the aforementioned conflict but also complements machine learning techniques well, leading to a neural network architecture that is capable of learning the behavior representation well-suited for control design.

\end{abstract}

\section{Introduction}\label{sec:introduction}
The availability of massive amounts of data that are collected and stored in modern industries has created new opportunities for direct big data-based analysis and control, particularly for systems whose first-principles models are difficult or costly to obtain. This paradigm has seen remarkable developments in the context of linear time-invariant (LTI) systems. In particular, the celebrated fundamental lemma developed by Willems et al. \cite{Willems:2005} formed the basis for a wide range of data-driven approaches that bypass explicit system identification. One of the most notable developments is the data-enabled predictive control (DeePC) algorithm \cite{Coulson:2019}, which has been further developed to incorporate stability and robustness certificates \cite{Berberich:2020a}. Other developments include stability design \cite{dePersis:2020,yan2025intrinsic}, optimal control \cite{daSilva:2018}, stochastic control \cite{pan2024data} and distributed control \cite{Yan:2024}.

In contrast, the development of data-driven methods for nonlinear systems remains limited. Existing approaches either rely on linear design \cite{Berberich:2022}, which can be conservative, extend to specific classes of nonlinear systems (e.g., \cite{guo2021data,Strasser:2020}) or employ clustering-based techniques that typically handle weak nonlinearities \cite{han2025big}. With the current rapid development of artificial intelligence, machine learning methods have gained attention as a promising direction for big data-based control design due to their universal approximating capability. However, many of these approaches are centered on state-space formulations, which are often challenging due to the lack of direct state measurements \cite{strasser2026overview}. Furthermore, enforcing the classic state space representation to be learned from the dataset can introduce bias through these structural constraints. Other efforts attempt to learn step-wise encoding with a dynamic linear latent variable space \cite{Tang:2025}, but this step-wise construction can limit the level of nonlinearity captured. In recent years, Koopman operator-based methods have also emerged as a popular direction, particularly when machine learning developments enable efficient learning of Koopman eigenfunctions \cite{lusch2018deep}. However, as a theory initially developed for analyzing autonomous systems, the Koopman technique becomes challenging in handling control inputs explicitly \cite{strasser2026overview}. The above developments and challenges highlight a fundamental tension that limits the integration of classical control theory with machine learning: classical control design relies on pre-specified model structures with explicit causality, focusing on how inputs determine the evolution of system states and outputs, while machine learning, being inherently data-centric, seeks to extract system characteristics directly from large datasets without imposing explicit causal assumptions or model structures. As such, enforcing predefined structures and causality contradicts the underlying rationale of learning from data, yet discarding such structures obscures the input/output relationships, making control synthesis via classical methods difficult. 

In this paper, we take the perspective of behavioral systems theory and demonstrate how it naturally complements machine learning. This synergy gives rise to a novel approach for big-data-driven learning of system behavior, enabling integrated control design and making behavior learning inherently control-aware. We first show that dynamical systems are naturally equipped with an intrinsic state variable, whose state map captures the correspondence between trajectories in the system behavior and their free components. This allows the state map to serve as a bijective mapping, bridging the system behavior to an equipotent state behavior, enabling control design to be carried out entirely in the state space without input or output variables. Rather than following the classical \emph{control law construction} rationale, the control design adheres to the \emph{sub-behavior selection} principle from behavioral systems theory, eliminating the need for causality assumptions. This approach fits naturally with big data-based learning, making it possible to devise a {\bf C}ontrol-{\bf A}ware {\bf L}earning of {\bf I}ntrinsic {\bf B}ehavior (CALIB) architecture, which calibrates the intrinsic state such that it encapsulates system behavior while also enabling control design.

The rest of this paper is organized as follows: Section \ref{sec:preliminaries} introduces essential concepts in behavioral systems theory; Section \ref{sec:equiState} introduces the intrinsic state and illustrates its properties; Section \ref{sec:controlDesign} develops the control design approach based on the intrinsic state; Section \ref{sec:CALIB} introduces the CALIB architecture; Section \ref{sec:example} illustrates the results in this paper using an example; Section \ref{sec:conclusion} concludes the paper and points to a number of future research directions.

{\bf Notations.} We use the standard notations $\sR{}$, $\sR{n}$, $\sR[m]{n}$, $\mathbb{Z}$, etc. For a space $\mathbb{W}$, the generic variable is denoted as $w$, and its dimension is denoted by $\mathsf{w}$. For a vector $v$, $\|v\|$ and $\|v\|_\infty$ denote its two norm and infinity norm, respectively.

\section{Preliminaries}\label{sec:preliminaries}
We provide a brief summary of the concepts in behavioral systems theory that are essential to this paper. The reader is referred to \cite{Willems:1991} for more details. In this paper, we consider discrete-time dynamical systems that are time-invariant and complete. Such a dynamical system can be defined as a triple $\Sigma=(\mathbb{T},\mathbb{W},\B)$, where $\mathbb{T}\subset\mathbb{Z}$ is the time axis, $\mathbb{W}$ is the signal space, and $\B\subset\mathbb{W}^\mathbb{T}$ is the behavior that satisfies time-invariance: $\sigma\B\subset\B$ ($\sigma$ is a shifting operator such that $\sigma w_k=w_{k+1}$), and completeness: $\{w\in\B\}\Leftrightarrow\{w_\bint{k_1}{k_2}\in\B_\bint{k_1}{k_2}, \forall k_1,k_2\in\mathbb{T}, k_1\leq k_2\}$, where $\B_\bint{k_1}{k_2}$ denotes the behavior restricted to the interval $[k_1,k_2]$, i.e.,
\begin{equation}
    \B_\bint{k_1}{k_2}=\{w\mid\exists w'\in\B, w_k=w'_k, \forall k\in[k_1,k_2]\cap\mathbb{Z}\},
\end{equation}
and $w_\bint{k_1}{k_2}$ is similarly defined. In this paper, we focus on dynamical systems whose behavior has a manifold structure, i.e., those whose signal space $\mathbb{W}$ is a manifold, and behavior $\B$ is a submanifold of $\mathbb{W}^\mathbb{T}$. 

For time-invariant complete systems, there exists an integer $l$ such that $w_\bint{k}{k+l}\in\B_l, \forall k\in\mathbb{T}$ implies that $w\in\B$, and the smallest integer for which this is true is the \emph{lag} of the system denoted by $\lag{\B}$. Such a dynamical system always admits a kernel representation of the form
\begin{equation}
    r(w_\bint{k-L}{k})=0.
\end{equation}
The time-invariance and completeness allow the dynamical feature of $\B$ to be fully captured by the behavior restricted to an interval $[0,L]$ with $L\geq \lag{\B}$, which we denote as $\B_L$. Trajectories of $\B$ can be constructed from those in $\B_L$ via trajectory weaving, i.e., \cite{Markovsky:2005}
\begin{equation}
\begin{split}
    &\{w^1,w^2\in\B_L, w^1_\bint{L-l}{L}=w^2_\bint{0}{l}, l\geq\lag{\B}\}\\
    &\hskip 3cm \Rightarrow\{w^1_\bint{0}{L-l-1}\wedge w^2_\bint{0}{L}\in\B_{2L-l}\},
\end{split}
\end{equation}
where $\wedge$ denotes trajectory concatenation. The manifest variable of the system $w$ admits an input/output partition $w=(u,y)$, with $u$ as the input and $y$ as the output, if $u$ is a \emph{free variable} of the system, i.e., for all trajectories of $u$, there exists a corresponding trajectory of $y$ such that $(u,y)\in\B$. Furthermore, for a trajectory in a time-invariant complete system with a given initial trajectory $w_\bint{0}{L}$ in which $L\geq\lag{\B}$, the future output trajectory can be uniquely identified from its corresponding input trajectory.

In many cases, the description of dynamical systems relies on the use of \emph{latent variable} $\ell$, leading to a latent variable dynamical system $\Sigma^{full}=(\mathbb{T},\mathbb{W},\mathbb{L},\B^{full})$, where $\B^{full}\subset(\mathbb{W}\times\mathbb{L})^\mathbb{T}$ is the full behavior. The corresponding manifest behavior $\B$ can be constructed from $\B^{full}$ as
\begin{equation}
    \B=\pi_w(\B^{full}),
\end{equation}
where $\pi_w$ denoted the projection onto the trajectory space of $w$. The latent variable is additionally a state variable if it possesses the \emph{state property}
\begin{equation}
\begin{split}
    &\{(w^1,\ell^1), (w^2,\ell^2)\in\B^{full}, \ell_k^1=\ell_k^2\}\\
    & \hskip 0.4cm \Rightarrow \{(w^1_\bint{-\infty}{k-1}\wedge w^2_\bint{k}{\infty},\ell^1_\bint{-\infty}{k-1}\wedge\ell^2_\bint{k}{\infty})\in\B^{full}\}.
\end{split}
\end{equation}
If the latent variable has the state property, the representation of full behavior is, in general, of the form
\begin{equation}\label{eq:ssRep}
    f(x_k, \sigma x_k, w_k)=0,
\end{equation}
where $w$ is the manifest variable and $x$ is the state variable. The smallest dimension of $x$ among all such representations with manifest behavior given by $\B$ is the \emph{state cardinality} of the system, denoted by $\n{\B}$. In such a case, the state is both controllable and observable \cite{sadegh2002minimal}. This means that the initial state and the input trajectory uniquely specify a trajectory $(w,x)\in\B^{full}$, and therefore a trajectory $w\in\B$.

A distinct feature of behavioral systems theory is that control is viewed as an interconnection between the to-be-controlled system and the controller. As such, the controlled system behavior is the set of trajectories that are admissible in both the system and the controller, and is therefore a \emph{subset} of the uncontrolled behavior. This means that control design in the behavioral framework is the search for a \emph{sub-behavior} from the to-be-controlled system that meets the control objectives. This rationale transforms the control design problem into a \emph{feasibility} problem, which comprises (i) \emph{Existence}: whether the desired controlled behavior exists in the uncontrolled system behavior, (ii) \emph{Implementability}: whether this desired controlled behavior can be implemented via the manipulated variable alone, and (iii) \emph{Liveness}: whether the live component in the controlled behavior can freely choose any trajectory from its behavior before control \cite{Willems:2002}. Since the controller needs to be able to determine $w_k$ from any trajectory $w_\bint{k-L}{k-1}$  from $\B_{L-1}$ to ensure effective trajectory weaving, the live component in the interval $[k-L,k]$ is $w_\bint{k-L}{k-1}$ \cite{yan2025behavioral}. 

\section{Bridging System Behavior to an Equipotent Intrinsic State Space}\label{sec:equiState}
As illustrated in Section \ref{sec:preliminaries}, it is sufficient to learn the behavior $\B_L$ for analysis and control design for system $\Sigma$. However, constructing $\B_L$ simply using its measured trajectories poses challenges to both system analysis and control design. To illustrate, denote
\begin{equation}\label{eq:trajWindow}
    \tilde{w}_k\coloneqq w_\bint{k-L}{k}
\end{equation}
as the trajectory for the interval $[k-L,k]$. One option is to learn the system behavior as a kernel representation, i.e.,
\begin{equation}
    r(\tilde{w}_k)=0.
\end{equation}
While this poses no structural constraints on the possible behavior it can describe, it has the risk of learning a trivial mapping (i.e., $r\equiv0$). To avoid this, structural constraints on the function $r$ must be imposed, making it a less general approach than it appears \cite{Tang:2025}. Furthermore, liveness requires that $\tilde{w}_\bint{k-L}{k-1}$ allows all trajectories from $\B_{L-1}$ rather than $\mathbb{R}^{L\mathsf{w}}$, which is a challenging offline control design condition to develop for nonlinear systems. This is because $\B_L$, as a submanifold of $\sR{(L+1)\mathsf{w}}$, is of a lower dimension than $(L+1)\mathsf{w}$; hence, the live components cannot be completely free. This means that the manifest variable alone does not provide sufficient flexibility to capture the system behavior in the most efficient way.

Alternatively, one could construct a state variable of dimension $\n{\B}$ and attempt to learn a traditional input/state/output representation
\begin{equation}
    x_{k+1}=f(x_k,u_k), \ y_k=h(x_k,u_k).
\end{equation}
While this approach leads to a representation with a lag of 1 and alleviates the difficulty of liveness design, such a forced structural constraint not only limits the possibility of learning a more generalized representation given by \eqref{eq:ssRep}, but it also inherently learns \emph{causality} (i.e., trajectory evolution) rather than \emph{correspondence} (i.e., trajectory space), the latter of which is in fact what big data represents more accurately. Much like system identification approaches, such forced input/output representation with causal structural constraints has a tendency to exhibit a larger bias error \cite{Dorfler:2022}, which is undesirable for stabilization control design because it typically relies on unbiased future predictions. 

The above discussions highlight the need to construct a latent variable representation for the system using a latent variable that encodes features of the trajectory space, captures the intrinsic complexity of the system behavior, and has a state property. We now show that such a latent variable indeed exists, and it possesses additional desirable features beyond the discussion above. 
\begin{thm}\label{thm:stateMap}
    Let $\Sigma=(\mathbb{T},\mathbb{W},\B)$ be a time-invariant complete dynamical system. There exists a latent variable dynamic system $\Sigma^{full}=(\mathbb{T},\mathbb{W},\mathbb{G},\B^{full})$ with latent signal space $\mathbb{G}=\mathbb{R}^{\mathsf{g}}$ where $\mathsf{g}=(L+1)\mathsf{u}+\n{\B}$, latent variable $g$ and full behavior represented by
    \begin{equation}\label{eq:stateMap}
        g_k=\chi(\tilde{w}_k).
    \end{equation}
    Furthermore,
    \begin{enumerate}[(i)]
        \item $\chi:\B_L\rightarrow\mathbb{R}^\mathsf{g}$ is bijective;
        \item $\B$ and $\B^g$ are equipotent, where 
        \begin{equation}\label{eq:stateBehavior}
        \B^g=\proj{g}{\B^{full}};
    \end{equation}
        \item $g$ is a state variable for $\B$.
    \end{enumerate}
\end{thm}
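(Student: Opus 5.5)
The plan is to build $\chi$ explicitly from the free components of a window $\tilde{w}_k$, and then read off (i)--(iii) from the preliminaries. Fix $L\geq\lag{\B}$ and an input/output partition $w=(u,y)$, with $u$ free. By the preliminaries, a minimal state $x$ of dimension $\n{\B}$ is controllable and observable. Moreover, the initial state together with the input trajectory uniquely determines the manifest trajectory.

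For a window $\tilde{w}_k=w_\bint{k-L}{k}\in\B_L$, I would define
\begin{equation*}
\chi(\tilde{w}_k)\coloneqq\bigl(x_{k-L},\,u_\bint{k-L}{k}\bigr)\in\sR{\n{\B}}\times\sR{(L+1)\mathsf{u}}=\mathbb{R}^{\mathsf{g}}.
\end{equation*}
Two facts make this well defined. First, $x_{k-L}$ is a function of $\tilde{w}_k$, since $L\geq\lag{\B}$ and observability identify the state from a window of that length. Second, time invariance lets us take the window on $[0,L]$ without loss of generality.

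The full behavior is then defined as $\B^{full}=\{(w,g)\mid w\in\B,\ g_k=\chi(\tilde{w}_k)\ \forall k\}$. Its projection onto $w$ is $\B$, so it is a latent variable representation of $\Sigma$ of the form \eqref{eq:stateMap}.

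For (i), the argument runs in two directions.
\begin{enumerate}[(a)]
\item \emph{Injectivity.} Two windows in $\B_L$ with the same initial state and the same input sequence coincide, because the state and the input uniquely determine the output.
\item \emph{Surjectivity.} Given any $(x_0,u_\bint{0}{L})\in\mathbb{R}^{\mathsf{g}}$, freeness of $u$ together with the full range of the minimal state yields a trajectory in $\B_L$ realizing it. Here I assume every $x_0\in\sR{\n{\B}}$ is reachable, which is what controllability of the minimal state should supply.
\end{enumerate}
So $\chi$ is a bijection $\B_L\to\mathbb{R}^\mathsf{g}$.

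For (ii), consider the map $w\mapsto g$ with $g_k=\chi(\tilde{w}_k)$. It is surjective onto $\B^g$ by definition. It is injective because $\chi$ is injective: $g_k$ recovers $\tilde{w}_k$ for every $k$, and hence the whole of $w$. Thus $\B$ and $\B^g$ are equipotent.

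For (iii), suppose $(w^1,g^1),(w^2,g^2)\in\B^{full}$ with $g^1_k=g^2_k$. By (i), $w^1_\bint{k-L}{k}=w^2_\bint{k-L}{k}$, so the two trajectories share a window of length $L+1\geq\lag{\B}+1$. The weaving property then gives $w\coloneqq w^1_\bint{-\infty}{k-1}\wedge w^2_\bint{k}{\infty}\in\B$, applied on each finite interval and extended to all of $\mathbb{T}$ by completeness.

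It remains to show that the concatenated latent trajectory is exactly $g_j=\chi(\tilde{w}_j)$ for every $j$.
\begin{enumerate}[(a)]
\item For $j\geq k$: $\tilde{w}_j$ has indices $\geq j-L\geq k-L$, and on $[k-L,k-1]$ the trajectories $w^1$ and $w^2$ agree, so $\tilde{w}_j=\tilde{w}^2_j$ and $g_j=g^2_j$.
\item For $j\leq k-1$: the window lies in $(-\infty,k-1]$, so $\tilde{w}_j=\tilde{w}^1_j$ and $g_j=g^1_j$.
\end{enumerate}
Hence the concatenation lies in $\B^{full}$, which is the state property.

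The main obstacle I expect is the surjectivity in (i), together with the precise identification of $x_{k-L}$ from $\tilde{w}_k$. Both require that the minimal state space is all of $\sR{\n{\B}}$, or at least globally parametrized by it, and that observability holds within a window of length $L$. For the manifold setting I would try to justify this by taking $x$ as global coordinates of the (assumed Euclidean) state manifold. A mere local diffeomorphism would only give local bijectivity, so this global identification is essential to the argument.
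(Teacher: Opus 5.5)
Your proposal is correct and follows the paper's proof essentially step for step. It uses the same construction $\chi(\tilde{w}_k)=\mathrm{col}(x_{k-L},\tilde{u}_k)$, gets injectivity because the initial state and inputs fix the outputs and surjectivity from freeness of $u$ with an arbitrary initial state, derives (ii) from bijectivity, and proves (iii) by matching the shared window and weaving; where the paper only says ``it is not difficult to see'' or defers (iii) to Theorem 3 of the earlier reference, you make the well-definedness of $x_{k-L}$ from $\tilde{w}_k$, the assumption that the minimal state ranges over all of $\mathbb{R}^{\n{\B}}$ (which controllability alone does not supply), and the concatenated latent-trajectory check explicit.
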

\begin{proof}
(i) Consider a mapping $\chi$ that maps $\tilde{w}_k$ to $g_k=\mathrm{col}(x_{k-L},\tilde{u}_k)$, where $\tilde{u}_k$ is defined similarly to $\tilde{w}_k$ in \eqref{eq:trajWindow}. For two trajectory pairs $(w^1,g^1), (w^2,g^2)\in\B^{full}$, if $\chi(\tilde{w}_k^1)=\chi(\tilde{w}_k^2)$, then $g_k^1=g_k^2$, which means that $\tilde{u}_k^1=\tilde{u}_k^2$ and $x_{k-L}^1=x_{k-L}^2$. This implies that $\tilde{y}_k^1=\tilde{y}_k^2$, and hence $\tilde{w}_k^1=\tilde{w}_k^2$, showing that $\chi$ is injective. Furthermore, any choice of $g_k$ corresponds to a choice of $x_{k-L}$ and $\tilde{u}_k$. It is not difficult to see that a corresponding trajectory $\tilde{w}_k\in\B_L$ must exist. This shows that that $\chi$ is surjective. The combination of injectivity and surjectivity of $\chi$ makes it a bijective mapping between $\B_L$ and $\mathbb{R}^\mathsf{g}$. The above discussions also show that the manifest behavior of this full behavior is indeed $\B$.

(ii) This follows directly from the bijectivity of $\chi$.

(iii) Let $(w^1,g^1), (w^2,g^2)\in\B^{full}$. If $g_k^1=g_k^2$, then, by the bijectivity of $\chi$, it follows that $\tilde{w}_k^1=\tilde{w}_k^2$. Following a similar argument to the proof of Theorem 3 in \cite{yan2025intrinsic}, it follows that
\begin{equation}
    (w,g)=(w_\bint{-\infty}{k}^1\wedge w_\bint{k+1}{\infty}^2,g_\bint{-\infty}{k}^1\wedge g_\bint{k+1}{\infty}^2)\in\B^{full}.
\end{equation}
This proved the state property of $g$, making it a state variable for $\B$.
\end{proof}
\begin{rem}
    While the proof of Theorem \ref{thm:stateMap} uses initial state $x_{k-L}$ and input trajectory $\tilde{u}_k$ as a constructed example, it is not the only valid choice. To illustrate, for any valid state $g$, one can choose an invertible function $\gamma:\mathbb{R}^\mathsf{g}\rightarrow\mathbb{R}^\mathsf{g}$ and define an equivalent state as $g'=\gamma(g)$ with the associated state map $\chi'=\chi\circ\gamma^{-1}$. This flexibility is beneficial for system behavior learning because measurement of the state $x$ is typically not available in the data driven setting, and the best chart for behavior learning is not necessarily this chosen example. 
\end{rem}
\begin{rem}
    Although $(x_{k-L},\tilde{u_k})$ cannot be used as a candidate for $g$ in practice, it does reveal the rationale for why such a $g$ exists: it implicitly encodes the information of \emph{all and only} free components in the uncontrolled  behavior. This explains why $(L+1)\mathsf{u}+\n{\B}$ is the smallest possible dimension of $g$. It also shows that the representation \eqref{eq:stateMap} \emph{always} exists, making the constructions in Theorem \ref{thm:stateMap} general and not restrictive to any class of nonlinear systems.
\end{rem}

As shown in Theorem \ref{thm:stateMap}, $\chi$ maps the system trajectories to the state space, and is therefore a \emph{state map} for the system \cite{Rapisarda:1997,Willems:1998}. Furthermore, the bijectivity of $\chi$ implies the existence of an inverse mapping of $\chi$. For the clarity of presentation, we denote this inverse mapping as $\eta$, i.e.,
\begin{equation}\label{eq:behaviorChart}
    \tilde{w}_k=\eta(g_k).
\end{equation}
In such a case, the vector $g_k$ can be viewed as a chart for $\B_L$, and $\eta$ is a \emph{parameterization map}. In other words, the latent variable $g$ has chart-state dual property. This bijective nature allows the analysis of the system behavior to be carried out \emph{completely} in the state space.

Using \eqref{eq:behaviorChart} and following a similar construction in \cite{yan2025intrinsic}, the full behavior can be brought to the following equivalent representation
\begin{subequations}\label{eq:equivRep}
    \begin{align}
        \Pi_-\eta(g_k)&=\Pi_+\eta(g_{k-1}) \label{eq:weaving}\\
        w_k&=\Pi_0\eta(g_k)\label{eq:parameterization}
    \end{align}
\end{subequations}
where $\Pi_-=\begin{bmatrix} I_{L\mathsf{w}} & 0_{L\mathsf{w}\times\mathsf{w}}\end{bmatrix}$, $\Pi_+=\begin{bmatrix} 0_{L\mathsf{w}\times\mathsf{w}} & I_{L\mathsf{w}} \end{bmatrix}$ and $\Pi_0=\begin{bmatrix}
    0_{\mathsf{w}\times L\mathsf{w}} & I_\mathsf{w}
\end{bmatrix}$. This representation includes a first-order state transition that encodes manifest variable trajectory weaving and a manifest behavior parameterization using $g$ as a chart. We now show that the state behavior $\B^g$ is precisely represented by \eqref{eq:weaving}.
\begin{prop}\label{prop:stateBehavior}
    Let $\Sigma^{full}$ be a latent variable dynamical system with full behavior given by \eqref{eq:stateMap}, with the inverse mapping of $\chi$ as $\eta$, then \eqref{eq:weaving} is a representation of $\B^g$.
\end{prop}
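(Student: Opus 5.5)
We show two inclusions: every $g\in\B^g$ satisfies \eqref{eq:weaving} for all $k$, and every sequence $g:\mathbb{T}\to\mathbb{R}^\mathsf{g}$ satisfying \eqref{eq:weaving} for all $k$ belongs to $\B^g$. Throughout we use the bijectivity of $\chi$ from Theorem \ref{thm:stateMap}(i), so that $\eta=\chi^{-1}:\mathbb{R}^\mathsf{g}\to\B_L$ is well defined and each $\eta(g_k)$ is a window in $\B_L$.

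\emph{Necessity.} Take $g\in\B^g$. By \eqref{eq:stateBehavior} there is $w$ with $(w,g)\in\B^{full}$, so $g_k=\chi(\tilde w_k)$ for all $k$ by \eqref{eq:stateMap}, and hence $\eta(g_k)=\tilde w_k=w_\bint{k-L}{k}$. The first $L$ blocks of $\tilde w_k$ are $w_\bint{k-L}{k-1}$, and the last $L$ blocks of $\tilde w_{k-1}$ are the same segment. Thus $\Pi_-\eta(g_k)=\Pi_+\eta(g_{k-1})$ holds for every $k$.

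\emph{Sufficiency.} Let $g$ satisfy \eqref{eq:weaving} for all $k$, and define $w_k\coloneqq\Pi_0\eta(g_k)$ as in \eqref{eq:parameterization}. The main step is to prove by induction on $j=0,\dots,L$ that the $(L-j)$-th block of $\eta(g_k)$ equals $w_{k-j}$, i.e.\ that $\eta(g_k)=w_\bint{k-L}{k}$.
\begin{itemize}
\item The case $j=0$ holds by definition of $w_k$.
\item For the step, \eqref{eq:weaving} says that block $i$ of $\eta(g_k)$ equals block $i+1$ of $\eta(g_{k-1})$ for $i=0,\dots,L-1$.
\item Applying this with $i=L-j$ and the induction hypothesis at time $k-1$ shows that block $L-j$ of $\eta(g_k)$ equals $w_{k-j}$.
\end{itemize}
Hence $\tilde w_k=\eta(g_k)\in\B_L$ for every $k$.

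Since $L\ge\lag{\B}$, the definition of the lag (equivalently, completeness together with the trajectory weaving property) gives $w\in\B$. Moreover $\chi(\tilde w_k)=\chi(\eta(g_k))=g_k$ for every $k$, so $(w,g)$ satisfies \eqref{eq:stateMap} and $(w,g)\in\B^{full}$. Therefore $g\in\proj{g}{\B^{full}}=\B^g$.

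The expected obstacle is the final step from "$\tilde w_k\in\B_L$ for all $k$" to "$w\in\B$". This needs the lag characterization of complete time-invariant systems stated in Section \ref{sec:preliminaries}, with the window length $L+1$ satisfying $L\ge\lag{\B}$. The argument also uses that $\eta$ takes values only in $\B_L$, which comes from the surjectivity part of Theorem \ref{thm:stateMap}(i). If $\mathbb{T}$ has a left endpoint, the induction must be run only for indices where $k-j\in\mathbb{T}$, and the first window is fixed directly by $\eta(g_{k_0})$. This is a bookkeeping issue rather than a substantive one.
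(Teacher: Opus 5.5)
Your proof is correct, but it takes a more self-contained route than the paper's. The paper argues by elimination. It takes for granted that \eqref{eq:equivRep} is a representation of $\B^{full}$; this is asserted just before the proposition by analogy with a construction in \cite{yan2025intrinsic}. It then notes that \eqref{eq:parameterization} merely defines $w$ from $g$, so it imposes no constraint on $g$, and projecting onto $g$ leaves \eqref{eq:weaving}.

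You never invoke \eqref{eq:equivRep}; you work directly from the defining relation \eqref{eq:stateMap}. Your necessity direction is the window-overlap observation. Your sufficiency direction consists of three steps: defining $w_k=\Pi_0\eta(g_k)$, the block-shift induction giving $\eta(g_k)=w_\bint{k-L}{k}$, and the lag argument giving $w\in\B$. Together these prove the nontrivial half of the equivalence between \eqref{eq:equivRep} and \eqref{eq:stateMap}, which the paper does not prove.

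The paper's version is shorter and highlights the structural point that the parameterization equation is unconstrained in $g$. Yours stands on its own and shows exactly which hypotheses are used:
\begin{enumerate}[(i)]
\item surjectivity of $\chi$, so that $\eta(g_k)$ is defined for an arbitrary $g_k$;
\item the fact that $\eta$ takes values in $\B_L$;
\item $L\ge\lag{\B}$.
\end{enumerate}
One small correction to your wording: (ii) holds because $\B_L$ is the domain of $\chi$, not because of surjectivity.
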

\begin{proof}
    Since \eqref{eq:equivRep} is a representation of $\B^{full}$, the behavior $\B^g$ can be represented by \eqref{eq:weaving} because \eqref{eq:parameterization} poses no restrictions on $g$. 
\end{proof}

The combination of the results in Theorem \ref{thm:stateMap} and Proposition \ref{prop:stateBehavior} shows that the dynamics of $w$ can be analyzed and controlled by treating the equipotent state behavior $\B^g$ with a representation completely independent of $w$ and whose variable dimension captures the intrinsic complexity of $\B_L$. We therefore refer to $g$ as the \emph{intrinsic state} of $\B$.
\begin{rem}
    If $\Sigma$ is an LTI system, i.e., $\Sigma$ satisfies $\mathbb{W}$ being a vector space and $\B$ being a linear subspace of $\mathbb{W}^\mathbb{T}$, then the parameterization map can be readily obtained directly from the fundamental lemma. Specifically, let $\Hk$ denote a Hankel matrix of order $L+1$ constructed from one of its $(T+1)$-step trajectories $\tilde{w}\in\B_T$. If $\mathrm{rank}(\Hk)=(L+1)\mathsf{u}+\n{\B}$, then $\B_L$ is equal to the column span of $\Hk$ \cite{Willems:2005}. If $\Hk$ has exactly $(L+1)\mathsf{u}+\n{\B}$ columns (e.g., processed via SVD), then a bijective map can be obtained as \cite{yan2025intrinsic}
    \begin{equation}\label{eq:bijectLTI}
        g_k=\chi(\tilde{w}_k)=\Hk^\dagger \tilde{w}_k, \ \tilde{w}_k=\eta(g_k)=\Hk g_k,
    \end{equation}
    with the state behavior $\B^g$ represented as
    \begin{equation}\label{eq:weavingLTI}
        \Pi_-\Hk g_k=\Pi_+\Hk g_{k-1}.
    \end{equation}
\end{rem}

\section{Stabilization Design Using Intrinsic State: From Linear to Nonlinear Systems}\label{sec:controlDesign}
For control design in this paper, we focus on exponential stabilization of the origin. As illustrated in Section \ref{sec:equiState}, the intrinsic state allows the behavior $\B$ to be equivalently analyzed in the behavior $\B^g$. Furthermore, the state property of $g$ allows for the possibility of constructing control Lyapunov functions for stabilization design. However, since $g$ captures behavior space, it inherently has no presumed causality. Leveraging the bijective mapping \eqref{eq:stateMap} and \eqref{eq:behaviorChart}, and using the behavioral framework, which is inherently non-causal, we develop a set of feasibility conditions that construct the Lyapunov function as well as the associated \emph{controlled behavior}, as opposed to the control law (e.g., approaches in \cite{dePersis:2020}). The control action can be readily obtained from the state $g$ via the parameterization map $\eta$. 

% , and its nonlinear behavior makes it difficult to construct a virtual manipulated variable similar to the construction in \cite{yan2025intrinsic}. 

We begin this section with rationale inspired by the stabilization design of LTI systems. Since the ultimate goal is to generalize to nonlinear systems, conditions for LTI systems are developed so that the representation \eqref{eq:weavingLTI} remains unprocessed.

% , i.e., we do not compute the virtual manipulated variable from it following the approach in \cite{yan2025intrinsic}.

\subsection{An Inspiration from Data-Driven Stabilization Design for LTI Systems}
The main idea for this design is to find a controlled behavior that is parameterized by the free component of the controlled system. In this case, since $g_{k-1}$ in \eqref{eq:weavingLTI} captures the dynamics of $\tilde{w}_{k-1}$, it is the live component for the controlled behavior, and is additionally free according to the discussion in Section \ref{sec:equiState}. Motivated by this observation, we focus on finding a controlled behavior that is also LTI, i.e., the state transition in the controlled system satisfies
\begin{equation}\label{eq:controlTransition}
    g_k=\Psi g_{k-1},
\end{equation}
where $\Psi\in\sR[\mathsf{g}]{\mathsf{g}}$ is to be determined. Using the parameterization map in \eqref{eq:bijectLTI}, this means that the controlled behavior satisfies
\begin{equation}\label{eq:controlPara}
    \tilde{w}_k=\Hk g_k=\Hk\Psi g_{k-1},
\end{equation}
which shows that this controlled behavior is parameterized by $g_{k-1}$, whose freeness implies liveness, and that the controlled behavior is indeed a subset of the uncontrolled behavior because $\cs{\Hk\Psi}\subset\cs{\Hk}$. We are now in a position to develop feasibility conditions for LTI systems.

\begin{thm}\label{thm:stabilizationLTI}
    Let $\Sigma$ be an LTI system with a latent variable dynamical system whose full behavior is given by \eqref{eq:bijectLTI}. There exists a controlled LTI behavior that is exponentially stable with convergence rate $\beta\in (0,1]$ if and only if there exists a symmetric matrix $W$ and a matrix $L$ such that
    \begin{subequations}
        \begin{align}
            W&>0,\\
            \Pi_-\Hk L-\Pi_+\Hk W&=0, \label{eq:weavingLMI}\\
            \begin{bmatrix}
                (1-\beta)W & *\\ L & W
            \end{bmatrix}&\geq0.\label{eq:LyapunovLMI}
        \end{align}
    \end{subequations}
    In such a case, a possible implementing control action is
    \begin{equation}
        u_k=\Pi_u\Hk LW^{-1} g_{k-1}=\Pi_u\Hk LW^{-1}\Hk^\dagger\tilde{w}_{k-1}.
    \end{equation}
\end{thm}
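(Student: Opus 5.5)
The proof follows the standard change-of-variables argument from data-driven stabilization \cite{dePersis:2020}, applied to the controlled transition \eqref{eq:controlTransition}. I take ``controlled LTI behavior'' to mean a sub-behavior of $\B^g$ generated by $g_k=\Psi g_{k-1}$ with $g_{k-1}$ free (hence live). By Proposition~\ref{prop:stateBehavior} and \eqref{eq:weavingLTI}, membership in $\B^g$ means $\Pi_-\Hk\Psi g_{k-1}=\Pi_+\Hk g_{k-1}$ for every $g_{k-1}\in\mathbb{R}^\mathsf{g}$. Because $g_{k-1}$ ranges over all of $\mathbb{R}^\mathsf{g}$, this is equivalent to the matrix identity $\Pi_-\Hk\Psi=\Pi_+\Hk$. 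Exponential stability with rate $\beta$ I take to mean the existence of $P>0$ with $\Psi^\top P\Psi\leq(1-\beta)P$, i.e. the quadratic Lyapunov function $V(g)=g^\top Pg$ decreases by the factor $(1-\beta)$ at each step. Since $\tilde w_k=\Hk g_k$, this transfers to exponential convergence of $w$.

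\emph{Sufficiency.} Given $W>0$ and $L$ satisfying the three conditions, set $\Psi=LW^{-1}$ and $P=W^{-1}$. Right-multiplying \eqref{eq:weavingLMI} by $W^{-1}$ gives $\Pi_-\Hk\Psi=\Pi_+\Hk$, so the controlled trajectories lie in $\B^g$ and, through $\eta$, in $\B$, with $g_{k-1}$ free. For stability, take the Schur complement of \eqref{eq:LyapunovLMI} with respect to the block $W>0$ to get $(1-\beta)W-L^\top W^{-1}L\geq0$. Substituting $L=\Psi W$ yields $W\big((1-\beta)W^{-1}-\Psi^\top W^{-1}\Psi\big)W\geq0$, and a congruence by $W^{-1}$ gives the Lyapunov inequality with $P=W^{-1}$.

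The control action follows from \eqref{eq:controlPara}. Since $\tilde w_k=\Hk\Psi g_{k-1}$, the input is $u_k=\Pi_u\Hk LW^{-1}g_{k-1}$, where $\Pi_u$ selects the last input sample. Substituting $g_{k-1}=\Hk^\dagger\tilde w_{k-1}$ from \eqref{eq:bijectLTI} gives the stated expression.

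\emph{Necessity.} Suppose a stable controlled LTI behavior exists with transition $\Psi$ and Lyapunov matrix $P>0$. Set $W=P^{-1}$ and $L=\Psi W$, then reverse each step above. Both the Schur complement and the congruence are equivalences because $W>0$, so the three conditions hold.

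The main obstacle is justifying the modelling assumption behind necessity. One must show that every stable LTI controlled behavior which is live (free in $g_{k-1}$) can be written as $g_k=\Psi g_{k-1}$. The argument I would give is that liveness plus the state property of $g$ (Theorem~\ref{thm:stateMap}(iii)) force $g_k$ to be determined by $g_{k-1}$, and linearity makes that map a matrix $\Psi$. A second gap is that exponential stability of a linear map with rate $\beta$ corresponds to a quadratic Lyapunov certificate only up to the precise definition of ``rate''. I would fix the definition as the Lyapunov decrease $V(g_k)\le(1-\beta)V(g_{k-1})$ so that the equivalence is exact.
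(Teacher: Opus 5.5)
Your argument is correct and is essentially the paper's proof: both restrict to controlled behaviors $g_k=\Psi g_{k-1}$ with a quadratic Lyapunov function, use the freeness of $g_{k-1}$ to turn inclusion in $\B^g$ into $\Pi_-\Hk\Psi=\Pi_+\Hk$, and pass to \eqref{eq:weavingLMI}--\eqref{eq:LyapunovLMI} via $W=M^{-1}$, $L=\Psi W$ and a Schur complement, recovering $\Psi=LW^{-1}$ for the control action. The paper handles the necessity caveats you raise the same way you do, by taking the form \eqref{eq:controlTransition} and the decrease condition $V(g_k)\le(1-\beta)V(g_{k-1})$ as given (via Theorem~8 of \cite{yan2025intrinsic}), so your state-property argument for general controlled behaviors is not needed---and it would not be airtight anyway, since $g$ is a state of the plant, not necessarily of a closed loop whose controller has additional memory.
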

\begin{proof}
    As illustrated in \cite[Theorem 8]{yan2025intrinsic}, the LTI system can be asymptotically stabilized with the controlled behavior in the form \eqref{eq:controlPara} if and only if there exists a Lyapunov function of the form $V(g) = g^\top M g$ with $M>0$ and a controlled state behavior of the form \eqref{eq:controlTransition} such that $V(g_k) - V(g_{k-1})<0$ for all $g_{k-1}\neq 0$. The exponential stability condition can therefore be analogously established as the existence of such a Lyapunov function such that
    \begin{equation}
        g_k^\top M g_k-(1-\beta)g_{k-1}^\top M g_{k-1}\leq 0,
    \end{equation}
    for all $g_{k-1}$. Since the controlled state dynamics is of the form \eqref{eq:controlTransition}, we have
    \begin{equation}
        g_{k-1}^\top((1-\beta)M - \Psi^\top M \Psi)g_{k-1}\geq0
    \end{equation}
    for all $g_{k-1}$. This means that
    \begin{equation}
        (1-\beta)M - \Psi^\top M \Psi\geq0.
    \end{equation}
    Taking Schur complement, defining $W=M^{-1}$ and $L=\Psi W$ gives \eqref{eq:LyapunovLMI}. Furthermore, to ensure the existence of the controlled behavior \eqref{eq:controlTransition}, it must be a subset of the uncontrolled behavior, which is represented by \eqref{eq:weavingLTI}. This means that all $g_k$ and $g_{k-1}$ satisfying \eqref{eq:controlTransition} must also satisfy \eqref{eq:weavingLTI}, i.e.,
    \begin{equation}
        \Pi_-\Hk \Psi g_{k-1}=\Pi_+\Hk g_{k-1}
    \end{equation}
    must hold for all $g_{k-1}$. This means that 
    \begin{equation}
        \Pi_-\Hk \Psi=\Pi_+\Hk.
    \end{equation}
    Right-multiplying both sides by $W$ and rearranging lead to \eqref{eq:weavingLMI}.

Upon satisfaction of \eqref{thm:stabilizationLTI}, the controlled behavior can be recovered as
\begin{equation}
    g_k=\Psi g_{k-1}=LW^{-1} g_{k-1},
\end{equation}
which then leads to the control action $u_k$ using $\Hk$.
\end{proof}

As shown in the proof of Theorem \ref{thm:stabilizationLTI}, the key inspiration is to determine the controlled behavior, which is a subset of the uncontrolled behavior, thereby validating the existence of the controlled behavior. The bijective mapping ensures that the stabilization of $g$ would lead to that of $w$ because linear transformations preserve the origin. Furthermore, since $g$ encodes the manipulated variable information, the controlled behavior is automatically implementable by the manipulated variable. Combined with the liveness analysis discussed in the beginning of this section, feasibility of the controlled behavior in $\B_L$, and therefore in $\B$, is guaranteed. This means that Theorem \ref{thm:stabilizationLTI} allows one to carry out control design without the need to pre-specify the controller structure, and the design in the state behavior requires no presumption of causality. The results in Theorem \ref{thm:stabilizationLTI} is therefore readily generalizable to nonlinear systems. 

\subsection{Generalization to Nonlinear Systems}
To generalize the above to a nonlinear system, we begin by assuming that the state map is anchored at 0, i.e., that
\begin{equation}
    \chi(0_{(L+1)\mathsf{w}\times 1})=0_{\mathsf{g}\times 1}.
\end{equation}
This assumption is without loss of generality because, due to the flexibility in $g$, one could always recenter the coordinates such that the mapping passes through the origin. The feasibility conditions are now immediately generalizable to nonlinear systems, which we state in the theorem below without proof.
\begin{thm}\label{thm:stabilizationNL}
    Let $\Sigma$ be a time-invariant complete system with a latent variable dynamical system whose full behavior is given by \eqref{eq:equivRep}. There exists a controlled behavior that is exponentially stable at rate $\beta\in(0,1]$ if there exist functions $V:\mathbb{R}^\mathsf{g}\rightarrow\mathbb{R}$ and $\psi:\mathbb{R}^\mathsf{g}\rightarrow \mathbb{R}^\mathsf{g}$ such that
    \begin{subequations}
        \begin{align}
            &\alpha_1(\|g_{k-1}\|)\leq V(g_{k-1})\leq \alpha_2(\|g_{k-1}\|),\label{eq:LFBound}\\
            &\Pi_-\eta\circ\psi(g_{k-1})=\Pi_+\eta(g_{k-1}),\\
            &V\circ\psi(g_{k-1})-(1-\beta)V(g_{k-1})\leq 0,
        \end{align}
    \end{subequations}
    hold for all $g_{k-1}\in\mathbb{R}^\mathsf{g}$, where $\alpha_1$ and $\alpha_2$ are two class-$\mathcal{K}_\infty$ functions. In such a case, a possible implementing control action is
    \begin{equation}
    u_k=\Pi_u\eta\circ\psi\circ\chi(\tilde{w}_{k-1}).
\end{equation}
\end{thm}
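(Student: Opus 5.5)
The plan is to mirror the proof of Theorem \ref{thm:stabilizationLTI}, replacing the linear transition $\Psi$ by the map $\psi$ and the quadratic Lyapunov function by $V$. Define the candidate controlled state behavior
\begin{equation*}
\B^g_c=\{g\in(\mathbb{R}^\mathsf{g})^\mathbb{T}\mid g_k=\psi(g_{k-1}),\ \forall k\},
\end{equation*}
and the corresponding manifest behavior $\B_c=\{w\mid \tilde{w}_k=\eta(g_k),\, g\in\B^g_c\}$. The argument then proceeds in three steps: existence, implementability with liveness, and exponential stability.

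\textbf{Existence.} The first step is to show $\B_c\subset\B$. For any $g\in\B^g_c$, the second condition of the theorem, evaluated at $g_{k-1}$, gives $\Pi_-\eta(g_k)=\Pi_-\eta\circ\psi(g_{k-1})=\Pi_+\eta(g_{k-1})$. This is exactly \eqref{eq:weaving}, so $g\in\B^g$ by Proposition \ref{prop:stateBehavior}. Setting $w_k=\Pi_0\eta(g_k)$ then produces a trajectory satisfying \eqref{eq:equivRep}, hence $(w,g)\in\B^{full}$ and $w\in\B$.

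\textbf{Implementability and liveness.} Because $\psi$ is defined on all of $\mathbb{R}^\mathsf{g}$ and $\chi:\B_L\to\mathbb{R}^\mathsf{g}$ is bijective by Theorem \ref{thm:stateMap}, every past window $\tilde{w}_{k-1}\in\B_L$ is admissible. It uniquely determines $g_{k-1}=\chi(\tilde{w}_{k-1})$ and then $\tilde{w}_k=\eta\circ\psi\circ\chi(\tilde{w}_{k-1})$. The first $L$ blocks of $\tilde{w}_k$ coincide with the last $L$ blocks of $\tilde{w}_{k-1}$ by the weaving identity, so the only new information is $w_k$. Since $g$ encodes the inputs, the stated $u_k$ selects this continuation; within $\B$ the output $y_k$ is then determined by the past window and $u_k$. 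This gives implementability via $u$, and liveness of $w_{|[k-L,k-1]}$ as discussed in Section \ref{sec:preliminaries}.

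\textbf{Exponential stability.} Along $\B^g_c$, the decrease condition gives $V(g_k)\le(1-\beta)^k V(g_0)$, and the bounds \eqref{eq:LFBound} then yield
\begin{equation*}
\|g_k\|\le\alpha_1^{-1}\bigl((1-\beta)^k\alpha_2(\|g_0\|)\bigr).
\end{equation*}
The final step is transferring this convergence to $w$ through $\eta$, using the anchoring $\chi(0)=0$ and continuity of $\eta$ as a chart, so that $\|\tilde{w}_k\|\to0$.

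I expect this last step to be the main obstacle. A genuine exponential bound needs more than the $\mathcal{K}_\infty$ bounds alone. I would argue that in the paper's intended sense, exponential decrease of $V$ with rate $\beta$ is what ``exponentially stable at rate $\beta$'' means, or else take $\alpha_i(s)=c_is^p$ and assume $\eta$ is locally Lipschitz with $\eta(0)=0$. Under either reading, $\|\tilde{w}_k\|\le c\,\|g_k\|$ near the origin gives exponential convergence of $w$, up to a constant that depends on $\|g_0\|$.
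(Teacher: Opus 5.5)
Your argument is essentially the one the paper intends. The paper states this theorem without proof, appealing to the proof of Theorem \ref{thm:stabilizationLTI} and to the inclusion $\mathrm{im}(\eta\circ\psi)\subset\mathrm{im}(\eta)$, and your existence, implementability/liveness and Lyapunov-decrease steps are that nonlinear transcription, written out more carefully than the paper does.

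The caveat you flag is genuine, and it concerns the statement rather than your proof: with general $\mathcal{K}_\infty$ bounds, exponential decrease of $V$ yields only a $\mathcal{KL}$ estimate on $\|g_k\|$. Your phrase ``under either reading'' is therefore too strong. Under the first reading, the bound $\|\tilde w_k\|\le c\|g_k\|$ gives only asymptotic convergence of $w$. Exponential convergence of $w$ requires your second reading, with $\alpha_i(s)=c_is^p$ and $\eta$ locally Lipschitz with $\eta(0)=0$.
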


Comparing Theorem \ref{thm:stabilizationNL} with Theorem \ref{thm:stabilizationLTI}, the controlled behavior has been generalized to an arbitrary function $\psi$, making the conditions in Theorem \ref{thm:stabilizationNL} less restrictive. However, the rationale is exactly the same, i.e., the existence of $\psi$ means that the controlled behavior becomes
\begin{equation}
    \tilde{w}_k=\eta(g_k)=\eta\circ\psi(g_{k-1}),
\end{equation}
which is a subset of the uncontrolled system behavior because
\begin{equation}
    \mathrm{im}(\eta\circ\psi)\subset\mathrm{im}(\eta),
\end{equation}
where $\mathrm{im}(\eta)$ denotes the image of function $\eta$. While Theorem \ref{thm:stabilizationNL} provides conditions for the feasibility of a desired controlled behavior, determining the form of the controlled state behavior represented by $\psi$ is not straightforward, and the search for a control Lyapunov function $V$ for nonlinear systems is well-known to be a challenging task. These difficulties become more pronounced in the big data setting because the system behavior is learned from data, and the learned representation may not be suitable for control design. In the next section, leveraging the flexibility of the intrinsic state variable and the powerful ability of neural networks to capture nonlinear dynamics, we will develop a neural network architecture that learns the intrinsic state map that not only captures the system dynamics, but is also suitable for control design.

\section{A Neural Network Architecture for Intrinsic State Learning with Integrated Control}\label{sec:CALIB}
The state map $\chi$, parameterization map $\eta$, together with the conditions in Theorem \ref{thm:stabilizationNL}, form the basis for the development of a learning architecture that is suitable for nonlinear control. In this section, we introduce  the {\bf C}ontrol-{\bf A}ware {\bf L}earning of {\bf I}ntrinsic {\bf B}ehavior (CALIB) architecture. It trains the maps $\chi$ and $\eta$ together with the controlled state behavior representation $\psi$ and the control Lyapunov function $V$ simultaneously in order to learn the intrinsic state best-suited for control design. 

\begin{figure}
    \centering
    \begin{tikzpicture}[scale = 0.6]

\node (0) at (-0.5,0) {$\tilde{w}_{k-1}$};

\draw [thick] (1, 0) -- (1,-1.5) -- (2, -1) -- (2, 1) -- (1,1.5) -- (1,0);
\node at (1.5,0) {\large $\chi$};

\node (1) at (3.2,0) {$g_{k-1}$};

\draw [thick] (8, 0) -- (8,-1) -- (9, -1.5) -- (9, 1.5) -- (8,1) -- (8,0);
\node at (8.5,0) {\large $\eta$};
\node (2) at (10.5,0) {$\hat{\tilde{w}}_{k-1}$};

\draw [thick] (8, 3) -- (8,4) -- (9, 3.5) -- (9, 2.5) -- (8,2) -- (8,3);
\node at (8.5,3) {\large $V$};
\node (3) at (10.5,3) {$V_{k-1}$};

\draw [thick] (8, -3) -- (8,-4) -- (9, -3.5) -- (9, -2.5) -- (8,-2) -- (8,-3);
\node at (8.5,-3) {\large $V$};
\node (4) at (7,-3) {$g_{k}^c$};
\draw [thick] (5, -3) -- (5,-4) -- (6, -4) -- (6, -2) -- (5,-2) -- (5,-3);
\node at (5.5,-3) {\large $\psi$};
\node (5) at (10.5,-3) {$V_k$};

\draw [thick, -{Latex[scale=0.6]}] (0) -- (1,0);
\draw [thick, -{Latex[scale=0.6]}] (2,0) -- (1);

\draw [thick, -{Latex[scale=0.6]}] (1) -- (8,0);
\draw [thick, -{Latex[scale=0.6]}] (9,0) -- (2);

\draw [thick, -{Latex[scale=0.6]}] (1) -- (8,3);
\draw [thick, -{Latex[scale=0.6]}] (9,3) -- (3);

\draw [thick, -{Latex[scale=0.6]}] (1) -- (5,-3);
\draw [thick, -{Latex[scale=0.6]}] (6,-3) -- (4);
\draw [thick, -{Latex[scale=0.6]}] (4) -- (8,-3);
\draw [thick, -{Latex[scale=0.6]}] (9,-3) -- (5);

\node (6) at (-0.5,-6) {$\tilde{w}_{k}$};
\node at (1.5,-6) {\large $\chi$};

\draw [thick] (1, -6) -- (1,-7.5) -- (2, -7) -- (2, -5) -- (1,-4.5) -- (1,-6);

\node (7) at (3.2,-6) {$g_{k}$};
\node at (1.5,0) {\large $\chi$};

\draw [thick] (8, -6) -- (8,-7) -- (9, -7.5) -- (9, -4.5) -- (8,-5) -- (8,-6);
\node at (8.5,-6) {\large $\eta$};
\node (8) at (10.5,-6) {$\hat{\tilde{w}}_{k}$};

\draw [thick, -{Latex[scale=0.6]}] (6) -- (1,-6);
\draw [thick, -{Latex[scale=0.6]}] (2,-6) -- (7);

\draw [thick, -{Latex[scale=0.6]}] (7) -- (8,-6);
\draw [thick, -{Latex[scale=0.6]}] (9,-6) -- (8);
\end{tikzpicture}
    \caption{The CALIB Architecture}
    \label{fig:CALIB}
\end{figure}
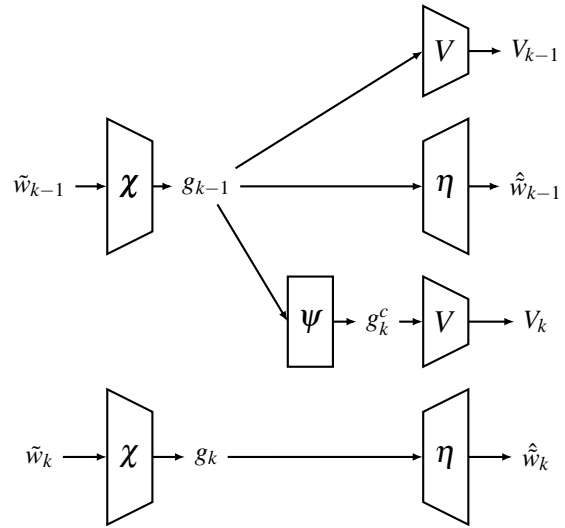

A schematic diagram of the CALIB architecture is shown in Fig. \ref{fig:CALIB}. The state map and parameterization map naturally form an autoencoder structure that reconstructs $\tilde{w}_{k-1}$, with the latent space being the intrinsic state $g_{k-1}$. A neural network representing $\psi$ is used to transition $g_{k-1}$ to the controlled state, which is denoted by $g_k^c$. Both the learned $g_{k-1}$ and the transitioned $g_k^c$ are sent to the same Lyapunov function network, producing $V_k$ and $V_{k-1}$. While learning the behavior in a single $(L+1)$-step interval is sufficient to capture the entire behavior in theory, the incomplete data set necessitates further guidance for effective learning. Specifically, to ensure that the learned state behavior reflects trajectory weaving, the trajectory $\tilde{w}_k$ whose past trajectory overlaps with $\tilde{w}_{k-1}$ is also fed into the same autoencoder. The loss function $\mathcal{L}$ is a combination of intrinsic behavior learning and controlled behavior learning, i.e.,
\begin{equation}
    \mathcal{L}=\mathcal{L}_{IB}+\mathcal{L}_{CB}.
\end{equation}
Specifically, intrinsic behavior learning loss
\begin{equation}
    \begin{split}
        \mathcal{L}_{IB}=&\|\tilde{w}_{k-1}-\eta\circ\chi(\tilde{w}_{k-1})\|_{\mathrm{MSE}}\\
        &+\lambda_\chi\|\chi(0)\|_{\mathrm{MAE}}+\lambda_\eta\|\eta(0)\|_{\mathrm{MAE}}\\
        &+\lambda_w\|\Pi_-\eta\circ\chi(\tilde{w}_{k})-\Pi_+\eta\circ\chi(\tilde{w}_{k-1})\|_{\mathrm{MSE}}\\
        &+\lambda_\infty(\left\| \tilde{w}_{k-1}-\eta\circ\chi(\tilde{w}_{k-1})\right\|_\infty+\left\|\tilde{w}_{k}-\eta\circ\chi(\tilde{w}_{k}) \right\|_\infty)
    \end{split}
\end{equation}
comprises a reconstruction error training the bijectivity of $\chi$, two anchoring losses with weightings $\lambda_\chi$ and $\lambda_\eta$ to ensure that the mapping passes through the origin, a weaving loss (with weight $\lambda_w$) to regulate the behavior of the intrinsic state, and an infinity norm loss (with weight $\lambda_\infty$) to impose an additional penalty on the worst-performing data point. The notations $\|\cdot\|_{\mathrm{MSE}}$ and $\|\cdot\|_{\mathrm{MAE}}$ denote the mean squared error and the mean absolute error, respectively. Note that $g_k$ and $g_k^c$ are not comparable as they likely correspond to two different future trajectories. The controlled behavior learning loss
\begin{equation}
\begin{split}
    \mathcal{L}_{CB}=&\lambda_e\|\Pi_-\eta\circ\psi\circ\chi(\tilde{w}_{k-1})-\Pi_+\eta\circ\chi(\tilde{w}_{k-1})\|_{\mathrm{MSE}}\\
    &+\lambda_V V\circ\chi(0)\\
    &+\lambda_\nabla\max\{V\circ\psi\circ\chi(\tilde{w}_{k-1})-(1-\beta)V\circ\chi(\tilde{w}_{k-1}),0\}
\end{split}
\end{equation}
consists of the loss to ensure the controlled behavior is a subset of the uncontrolled behavior (with weight $\lambda_e$), equilibrium point anchoring for the Lyapunov function (with weight $\lambda_V$), and exponential decay of the Lyapunov function along the controlled behavior (with weight $\lambda_\nabla$). To ensure that the Lyapunov function is bounded following \eqref{eq:LFBound}, it is preferable to construct the Lyapunov function network with additional strictures. One possibility is
\begin{equation}
    V(g)=\alpha_1(\|g\|)+(\alpha_2(\|g\|)-\alpha_1(\|g\|))S\circ\phi(g),
\end{equation}
where $\phi:\mathbb{R}^\mathsf{g}\rightarrow\mathbb{R}$ is a neural network to be trained, and $S$ denotes the sigmoid function.

The CALIB architecture is a combination of the learning of system behavior and controlled behavior, training the state map $\chi$, the parameterization map $\eta$, the controlled behavior representation $\psi$, and the control Lyapunov function $V$ simultaneously. As such, the learned intrinsic state not only captures the system complexity in a generic way but is also suitable for control design. This design highlights the interplay between behavioral systems theory and machine learning: the former guarantees the feasibility of the latter, while the latter realizes the former. The combined effort leads to an architecture that imposes almost no structural constraints on the underlying system behavior or machine learning components beyond those that are inherent to the system.

\section{Illustrative Example}\label{sec:example}
We consider a rigid-body drone in the same setting as in \cite{SAVKIN2023154}. The discretized model of the drone is given by
\begin{subequations}
    \begin{align}
x_{k+1} &= x_k + \tau v_k \cos(\mu_k), \\
y_{k+1} &= y_k + \tau v_k \sin(\mu_k), \\
\mu_{k+1} &= \mu_k + \tau \omega_k,\\
z_{k+1} &= z_k + \tau s_{k},
\end{align}
\end{subequations}
where $x, y, z$ are the coordinates of the spatial location of the drone in meters, $\mu$ is the yaw in radians, and $\tau$ is the sampling period, which is chosen as $0.1 \mathrm{s}$ for this example. Manipulated variables of the drone include the forward velocity $v$ in $\mathrm{m/s}$, yaw rate $\omega$ in $\mathrm{rad/s}$ and vertical velocity $s$ in $\mathrm{m/s}$. Note that the model is only used for data generation purposes and is not used during offline design or online implementation. Furthermore, the measured system output $y$ contains only the position coordinates $x, y, z$ and does not have access to the yaw angle information. The system manifest variable is therefore $w=\mathrm{col}(x,y,z,v,\omega,s)$.

In this example, we aim to stabilize the drone at the origin $(0,0,0)$, which corresponds to the operating input $v=0$, $\omega=0$, and $s=0$. We begin by collecting trajectory measurements near the origin to demonstrate the validity of the linear design, and subsequently show the control outcome of CALIB with an initial position far from the origin. 
\subsection{The Linear Design Approach}
Simulation has been performed with $L=5$ using the design method for LTI systems described in Theorem \ref{thm:stabilizationLTI} using a Hankel matrix constructed from a measured trajectory, and the control result is shown in Fig. \ref{fig:ys}. The controlled system has been stabilized to a position that is very close to the origin, and this small offset is because the behavior representation is constructed using data from the nonlinear system. The control performance is compared against the DeePC algorithm, with results shown in Fig. \ref{fig:ysds}. Compared with DeePC, the approach in Theorem \ref{thm:stabilizationLTI} leads to a much smoother trajectory that ends in a position much closer to the setpoint.
\begin{figure}
    \centering
    \includegraphics[width=0.8\linewidth]{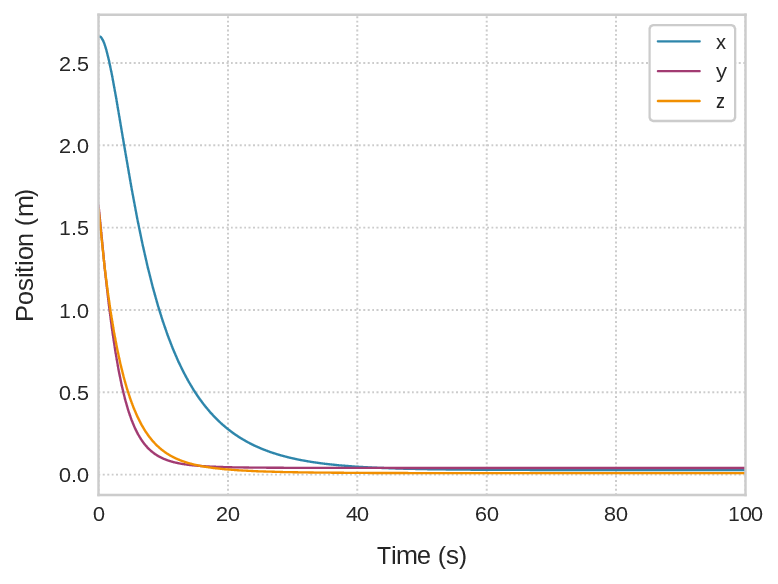}
    \caption{Control performance using the linear design in Theorem \ref{thm:stabilizationLTI}}
    \label{fig:ys}
\end{figure}

\begin{figure}
    \centering
    \includegraphics[width=0.8\linewidth]{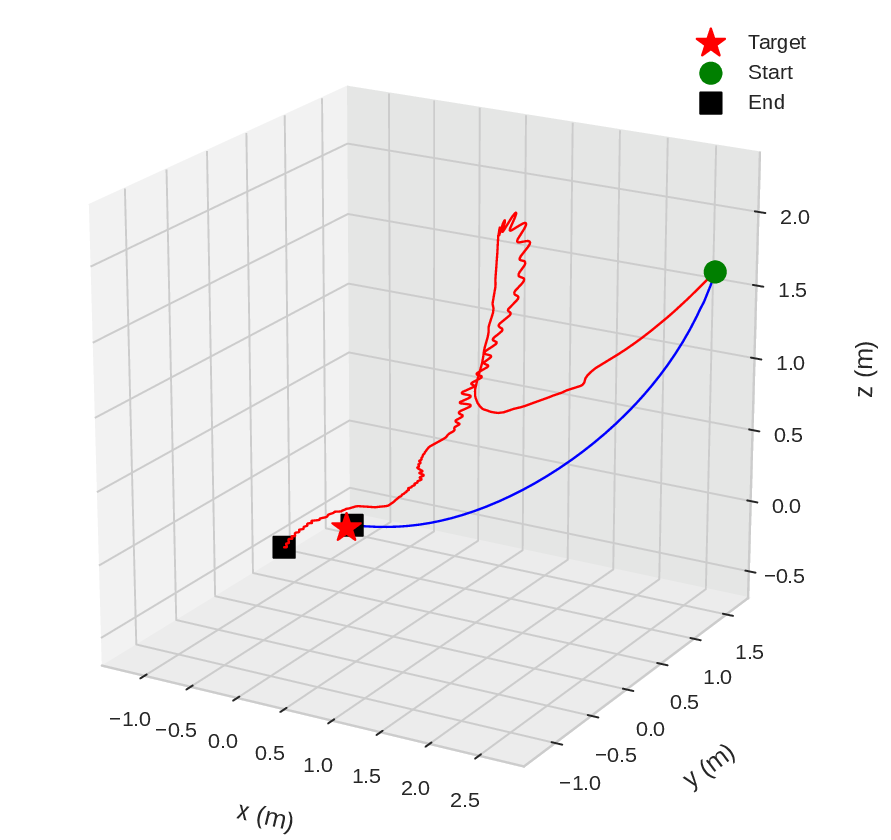}
    \caption{Performance comparison between the design in Theorem \ref{thm:stabilizationLTI} (blue) and DeePC (red)}
    \label{fig:ysds}
\end{figure}
\subsection{CALIB Learning and Control Performance}
To train the CALIB network, 110,000 trajectories (100k/10k for training/testing) of length 201 have been collected. The input trajectories are random sequences from the ranges $v\in[0,5]$, $\omega\in[-0.5,0.5]$, and $u_v\in[-5,5]$ with random hold periods between 1 and 5. For this study, we choose $L=4$, which means that the dimension of $g$ is 19. The data is normalized using z-scores for training, and the hyperparameters chosen for training are $\lambda_\chi=\lambda_\eta=\lambda_w=\lambda_e=\lambda_V=1$, $\lambda_\infty=10^{-5}$ and $\lambda_\nabla=10$. This places equal importance on reconstruction, state map anchoring, trajectory weaving, and Lyapunov function anchoring, with an emphasis on Lyapunov stability along the controlled behavior.

The control performance is shown in Fig.~\ref{fig:position}, with the 3D trajectory plot shown in Fig.~\ref{fig:3d_trajectory}. They show that the drone is stabilized to a position that is close to the origin. The offset at steady state may be caused by the fact that, while the errors in the loss function can be trained to small quantities, their effects become significant when the position is close to the origin. As a comparison, the trajectory predicted by CALIB has also been included in the figures, showing that the predicted trajectory aligns well with the actual one. This example demonstrates the ability of CALIB to capture system behavior while carrying out effective control design simultaneously.

\begin{figure}
    \centering
    \includegraphics[width=0.8\linewidth]{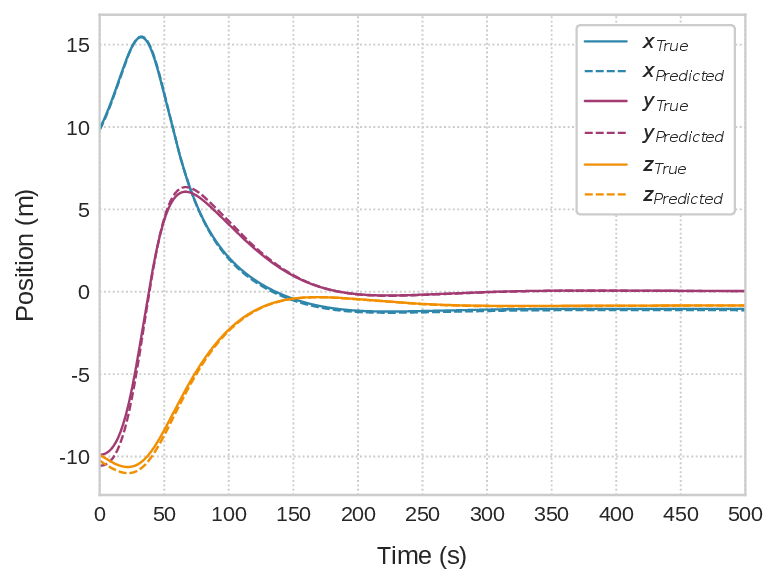}
    \caption{Control performance using the CALIB architecture}
    \label{fig:position}
\end{figure}

\begin{figure}
    \centering
    \includegraphics[width=0.8\linewidth]{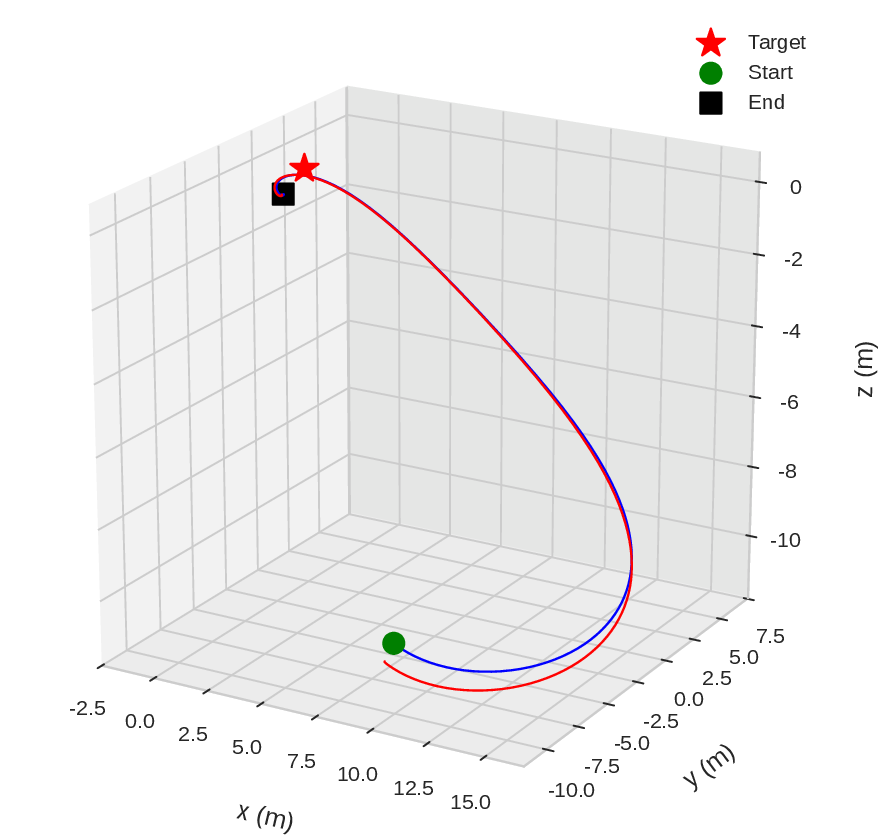}
    \caption{Comparison between true flight trajectory (blue) and that predicted using CALIB (red)}
    \label{fig:3d_trajectory}
\end{figure}

\section{Conclusion and Outlooks}\label{sec:conclusion}
    In this paper, a new approach has been developed that enables the seamless integration of machine learning and control design for big data-driven learning and control of nonlinear systems. Central to this approach is the introduction of the concept of the intrinsic state variable, which captures the intrinsic complexity of the system behavior, allowing the system behavior to be represented through an equipotent state behavior. The system behavior can be characterized by an equipotential state behavior. By developing a control design approach using this intrinsic state in the behavioral framework, we have developed the CALIB architecture that learns the system behavior integrated with control design. We believe that these developments offer a fresh perspective that is useful in big data-based nonlinear control. 
    
    This approach also opens up several avenues for future research. Firstly, the current development assumes that the data are noise-free. Robustification of the framework is, therefore, necessary to handle measurement noise. Furthermore, the CALIB design can be refined to mitigate the effects of training errors on control performance. Another interesting direction is to extend the proposed approach for setpoint-independent tracking control design. This would mean that the system behavior is endowed with additional structures, which may lead to an enriched CALIB architecture.

\bibliographystyle{IEEEtran}
\bibliography{refUpdated}

\end{document}